\documentclass[prx,aps,twocolumn,notitlepage,superscriptaddress,showpacs,nofootinbib]{revtex4-2}

\usepackage{enumerate,appendix}
\usepackage{qcircuit}
\usepackage{amsmath, amsthm, amssymb}
\usepackage{color,calc,graphicx}
\usepackage[usenames,dvipsnames,svgnames,table,cmyk,hyperref]{xcolor}
\usepackage[colorlinks]{hyperref}
\usepackage{optidef}
\hypersetup{
	colorlinks = true,
	urlcolor = {blue},
	citecolor = {blue},
	linkcolor= {blue}
}

\usepackage{latexsym}

\usepackage{bbm}

\usepackage[charter,cal=cmcal,sfscaled=false]{mathdesign}
\usepackage{booktabs}
\graphicspath{ {./images/} }
\usepackage{multirow} 
\usepackage{dcolumn}
\usepackage{mathrsfs}
\usepackage{csvsimple-l3}

\def \be {\begin{equation}}
\def \ee {\end{equation}}

\newcommand{\ket}[1]{|#1\rangle}

\def\>{\rangle}
\def\<{\langle}

\newtheorem{definition}{Definition}
\newtheorem{theorem}{Theorem}
\newtheorem{lemma}[theorem]{Lemma}

\usepackage{algpseudocode}

\usepackage[most]{tcolorbox}

\usepackage{makecell}
\usepackage{pifont}

\begin{document}

\title{Learning unknown stabilizer codes using product measurements}

\author{Heather Leitch}
\email{h.leitch@sheffield.ac.uk}
\author{Sri.S.Tirukkovalluri}
\author{Yingkai Ouyang}
\email{y.ouyang@sheffield.ac.uk}
\affiliation{School of Mathematical and Physical Sciences, University of Sheffield, Sheffield, S3 7RH, United Kingdom}

\begin{abstract} 
Efficiently characterizing quantum error correcting codes is a key challenge on the path to fault-tolerant quantum computation. Stabilizer codes, a central class of such codes, are defined by a set of stabilizer generators. Here, we present an algorithm that uses random single-qubit measurements to learn the stabilizer generators of any stabilizer code from $N$ copies of stabilizer states in its codespace, requiring no prior knowledge of the code's structure. This also enables verification that a device implements its intended code. We derive a lower bound on $N$ needed to recover the stabilizer generators with high probability, together with a bound on the algorithm's overall probability of success. When applied to quantum low-density parity-check (qLDPC) codes, a leading candidate for practical fault-tolerant architectures, our approach has query complexity that scales polynomially with $n$, the number of qubits.
\end{abstract}

\maketitle

\section{Introduction}

Stabilizer codes are one of the most widely studied classes of quantum error correction codes. The stabilizer formalism, first introduced by Gottesman \cite{gottesman1997stabilizer}, defines the codespace as the simultaneous $+1$ eigenspace of a set of mutually commuting Pauli operators, called stabilizers. The Calderbank-Shor-Steane (CSS) codes construction \cite{calderbank1996good,steane1996simple} then showed how quantum stabilizer codes could be built from pairs of classical linear codes, leading to the well known $[[7,1,3]]$ Steane code \cite{steane1996multiple} and $[[9,1,3]]$ Shor code \cite{shor1995scheme}. Arguably, one of the biggest advancements in stabilizer codes is the surface code \cite{bravyi1998quantum}, which to this day is one of the leading architectures for experimental implementation for fault tolerant quantum computing \cite{krinner2022realizing,google2023suppressing,google2025quantum}, as it is based on a 2D lattice and involves only nearest neighbour interactions.
Furthermore, qLDPC codes have near-capacity performance \cite{kuo2026degenerate} under the independent and identically distributed quantum erasure model \cite{grassl1997codes,wu2022erasure,gu2025fault}.

A key drawback of surface codes is that they require a large number of physical qubits per logical qubit. Quantum low-density parity-check (qLDPC) codes are a class of stabilizer codes in which every stabilizer generator has bounded weight and each physical qubit participates in a bounded number of stabilizer generators \cite{breuckmann2021quantum}. It has been shown that asymptotically good qLDPC codes exist \cite{panteleev2022asymptotically}, meaning that both the number of logical qubits and the code distance scale linearly with the number of physical qubits ($n$).

Quantum state tomography (QST) aims to obtain a complete description of an unknown quantum state by preparing multiple identical copies of the state and performing measurements in different bases. Learning quantum states is a central tool for the verification and benchmarking of quantum devices \cite{paris2004quantum,cramer2010efficient,ivanova2023optimal,blume2025quantum}.
A fundamental limitation of QST is that the number of measurements required to fully characterize an arbitrary quantum state scales exponentially with the number of qubits \cite{cramer2010efficient, christandl2012reliable}. 
To overcome this limitation, quantum state learning has been formulated within the framework of Probably Approximately Correct (PAC) learning, where the goal is to output a state that is within an error $\epsilon$ of the true state with probability at least $1-\delta$. This allows approximate state learning with substantially fewer resources than required for full state tomography \cite{aaronson2007learnability,lowe2022lower}.

Recent advances at the intersection of computational learning theory and quantum information science have explored the feasibility of learning various quantum states \cite{anshu2024survey, lai2022learning, arunachalam2022optimal}, along with different measurement approaches \cite{aaronson2018online, lowe2021learning}. In the case of stabilizer states, improved guarantees are known: sample complexity scales as $O(n)$ with collective measurements (joint measurements on multiple copies of the state that are information-efficient but experimentally demanding) and as $O(n^2)$ with single-copy measurements, where each copy is measured separately \cite{pirsa_PIRSA:08080052}. These bounds were recently extended to a bounded quantum memory setting \cite{arunachalam2026optimal}. Alternative approaches achieve $O(n)$ copies with polynomial runtime using Bell measurements \cite{montanaro2017learning}.

Stabilizer states are also efficiently PAC-learnable \cite{rocchetto2017stabiliser}, meaning the run-time of their learning algorithms is polynomial in $n$, the number of qubits.
Recent research has explored the learning of graph states, which are quantum states represented by graphs and under local Clifford transformations are equivalent to stabilizer states  \cite{schlingemann2001stabilizer, van2005local, zeng2007local}.  Graph state learning has been studied using collective measurements \cite{montanaro2020quantum}, as well as more restricted measurement models. In particular, a product-measurement-based learning algorithm was recently introduced \cite{ouyang2022learning}, where measurements are restricted to independent local measurements on individual qubits, making them experimentally simpler but generally less information is learned compared to  global measurements.

Recent work has studied the closely related problem of verifying fidelity to a known QEC code subspace using local measurements \cite{chen2025quantum}, but this assumes the code is already known. 
In this paper, we move beyond learning individual stabilizer states and focus on learning the stabilizer generators that define stabilizer codes. With recent advances in experimental quantum computing and the first implementations of quantum error correction, reliable methods for verifying and characterizing quantum codes have become increasingly important. 
We address this problem by proposing an algorithm that can take any $N$ states from a codespace and efficiently learn the codes corresponding stabilizer generators, thereby providing an approach to code verification.
This algorithm works for any stabilizer code, but we show that it is particularly effective for qLDPC codes. We also establish a lower bound on $N$, the number of states in the codespace, needed to reliably recover all the stabilizer generators.
The algorithm we introduce for learning stabilizer codes uses product measurements, meaning we take a single-qubit measurement of each qubit.
Specifically, we focus on learning $n$ independent generators of an $n$-qubit stabilizer codeword by performing randomized product measurements on multiple copies of the state. Our algorithm uses random sampling approach, with qubits of the state are measured in repeated measurement patterns randomly chosen from different bases ($X$, $Y$, $Z$). By analysing the parity of these  measurement outcomes across various qubit subsets of the quantum state (codeword), we identify the $n$ stabilizer generators that define the codeword.

We begin by defining a stabilizer code in terms of its stabilizer generators and codewords in section \ref{sec:results}, focusing specifically on qLDPC codes. Building on these definitions, we develop algorithms for learning the stabilizer generators for any stabilizer code, from any $N$ stabilizer states in the codespace, presented with illustrative examples in sections \ref{sec:alg1sum} and \ref{sec:alg2sum} (with explicit algorithms in section \ref{sec:method}). An overview of this approach is presented in Fig. \ref{fig:Diagram}. To understand when these algorithms successfully recover the stabilizer generators, we state Theorem \ref{th:Nbound}  (with proof in section \ref{sec:Nbound}) which gives a lower bound on $N$, the number of codewords needed. This is followed by an upper bound on the overall probability that our algorithms fail, stated in Theorem \ref{th:pfailure} (with proof in section \ref{sec:proof6}), which leads to the conclusion that for qLDPC stabilizer codes, only a polylogarithmic number of copies of the stabilizer states are required to learn the stabilizer generators.

This protocol therefore gives a practical method for verifying that the qLDPC codes implemented on near-term devices are indeed the codes intended, without requiring an impractically large number of copies of codewords.
\begin{figure*}[htbp]
\centering
\makebox[0pt]{\includegraphics[width=0.9\paperwidth]{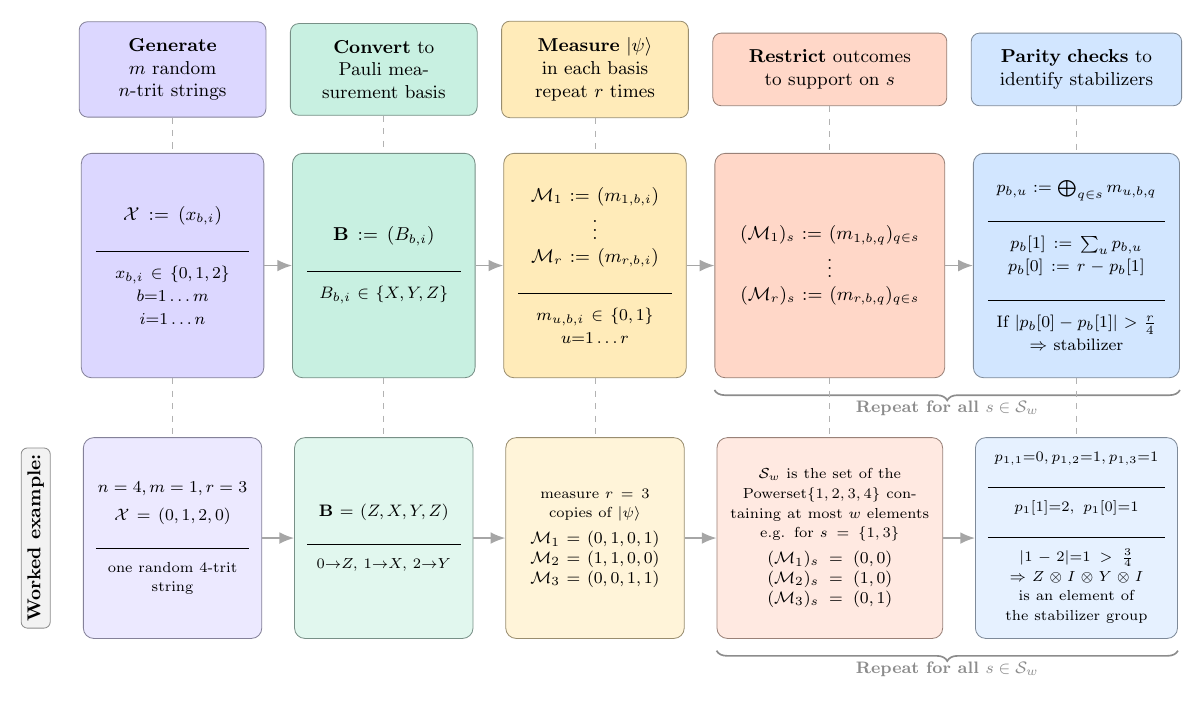}}
\caption{Overview of the algorithm which allow us to identify the stabilizer generators of a stabilizer code, given any $N$ codewords. The first $3$ steps are explained in further detail in sec. \ref{sec:alg1sum} and the final $2$ in sec. \ref{sec:alg2sum}.}
\label{fig:Diagram}
\end{figure*}

\section{Main Results}\label{sec:results}
Here we present an algorithm, given in two parts and summarized in Fig. \ref{fig:Diagram}, which takes as input any $N$ stabilizer states in the codespace of a stabilizer code and outputs the stabilizer generators defining the code. We derive a bound on the number of copies of the stabilizer state required for the algorithm to succeed, as well as the effect of depolarizing noise and therefore give a bound on the algorithm's overall probability of success. We begin, however, by defining stabilizer codes and their codewords.
\begin{definition}\label{def:stabilizer}
A stabilizer code can be defined by parameters $[[n,k,d]]$, where $n$ is the number of physical qubits, $k$ is the number of logical qubits and $d$ is the distance.
The code is defined by a set of independent commuting Pauli operators ${G_1,\dots,G_g}$, called the generators of the code. Here, $g=n-k$ is the number of independent generators.
\end{definition}
\begin{definition}\label{def:codeword}
The codewords are the states $\ket{\psi}$ satisfying
\begin{equation}
G_j\ket{\psi} = \ket{\psi},\;\;\;\;\; \forall j = 1,\dots,g.
\end{equation}
\end{definition}

\begin{tcolorbox}[enhanced jigsaw,breakable,pad at break*=1mm,
  colback=blue!5!white,colframe=blue!75!black,title=Example I: Stabilizer generators and codewords]
  The $[[4,2,2]]$ quantum error correcting code \cite{rains2002quantum,gottesman1997stabilizer} has two stabilizer generators
\begin{equation}\label{eq:gen}
\langle XXXX,ZZZZ\rangle
\end{equation}
and four logical codewords 
\begin{align}
    |0_L\> &= (|0000\> + |1111\>)/\sqrt 2\\
    |1_L\> &= (|0011\> + |1100\>)/\sqrt 2\\
    |2_L\> &= (|0110\> + |1001\>)/\sqrt 2\\
    |3_L\> &= (|1010\> + |0101\>)/\sqrt 2,
\end{align}
with the codespace being defined by the span of these four logical codewords.
\end{tcolorbox}

We now focus our attention on a subset of stabilizer codes, quantum low density parity check codes.
\begin{definition}\label{def:qldpc}
Quantum low-density parity check (qLDPC) codes \cite{breuckmann2021quantum,pecorari2025high} are stabilizer codes for which there exist stabilizer generators such that both the number of qubits acted on nontrivially by each stabilizer generator and the number of stabilizer generators acting on each qubit are bounded by a constant. 
\end{definition}

\subsection{Algorithm 1 Summary}\label{sec:alg1sum}
Generate $m$ random $n$-trit  strings, ${\bf x}=(x_1,\dots,x_n)$, where $x_i\in \{0,1,2\}, \forall i=1,\dots,n$.
These strings define an $m\times n$ matrix $\mathcal{X}$ whose $b$th row, denoted $(\mathcal{X})_b$, specifies an $n$-trit string, with $b=1,\dots,m$. 
Each $n$-trit string corresponds to a measurement basis ${\bf B}=\bigotimes_{i=1}^n B_i$, where
\begin{equation}
B_i = 
\begin{cases}
Z & \text{if}\; \; x_i = 0\\
X & \text{if} \;\; x_i = 1\\
Y & \text{if} \;\; x_i = 2
\end{cases}.
\end{equation}

For each measurement basis ${\bf B}$, we take $r$ copies of the state $\ket{\psi}$ and measure qubit $i$ in the basis $B_i$. 
Each repetition produces an $n$-bit measurement outcome string ${\bf m}=(m_1,\dots,m_n)$, where $m_i$ is equal to 0 if the eigenvalue of the corresponding $B_i$ is $+1$, and $1$ if the eigenvalue is $-1$.
For each measurement basis $(\mathcal{X})_b$, the corresponding outcome string forms the $b$th row of an $m\times n$ measurement matrix $\mathcal{M}$. 
Repeating the measurements $r$ times yields a collection of matrices $\mathcal{M}_u$, where $u=1,\dots,r$.

\begin{tcolorbox}[enhanced jigsaw,breakable,pad at break*=1mm,
  colback=blue!5!white,colframe=blue!75!black,title=Example II: Algorithm $1$]
  Generate $m$ random 4-trit strings
  \begin{equation}\label{eq:alg1}
m\begin{cases}\begin{pmatrix}
    0120\\
    \vdots\\
    1111\\
    \vdots\\
    0000\\
    \vdots
\end{pmatrix}
\end{cases}
\xrightarrow{\substack{\text{Corresponding to}\\ \text{measurement in} \\ \text{ the basis}}}
\begin{pmatrix}
    ZXYZ\\
    \vdots\\
    XXXX\\
    \vdots\\
    ZZZZ\\
    \vdots
\end{pmatrix}.
\end{equation}
For the sake of this example, lets assume we have $N$ copies of the stabilizer state $\ket{\psi}\equiv \ket{0_L}$. For each $4$-trit string, we take $r$ copies of the state $\ket{\psi}$ and measure each
qubit in the corresponding basis:
\begin{center}
\setlength{\tabcolsep}{3pt}
\small
\begin{tabular}{|c|c|c|}
\hline
\makecell{Measurement \\ Basis} & \makecell{Possible Measurement \\ Outcomes} & \makecell{Outcome \\ Probabilities} \\ 
\hline
$ZXYZ$ & 
\makecell{$0000,\;0010,\;0100,\;0110$ \\ $1001,\;1011,\;1101,\;1111$} & $1/8$ \\ 
\hline
$XXXX$ & \makecell{$0000,\;0011,\;0101,\;0110$ \\ $1001,\;1010,\;1100,\;1111$} & $1/8$ \\ 
\hline
$ZZZZ$ & $0000, \; 1111$  & $1/2$ \\ 
\hline
\end{tabular}
\end{center}
\end{tcolorbox}

\subsection{Algorithm 2 Summary}\label{sec:alg2sum}
Let us define 
\begin{equation}
\mathcal{S}_w:=\left\{ S \subseteq \{1,\dots,n\} : |S| \le w \right\},
\end{equation}
which is the powerset of $\{1,\dots,n\}$ containing sets of at most $w$ elements.
Each measurement outcome is an $n$-bit string. If ${\bf m}=(m_1,\dots,m_n)$
denotes such an $n$-bit string, 
we let ${\bf m}_{s} = (m_j)_{j \in s}$ denote a substring of ${\bf m}$ with components from the set $s \in \mathcal{S}_w$.
Given any bit-string, we define the parity of the bit string as its Hamming weight modulo 2. If the parity of a bit-string is 0, we say that it has even parity, and increase $p[0]$ by $1$. If the parity of a bit-string is 1, we say that it has odd parity and increase $p[1]$ by $1$.
For every $s \in \mathcal{S}_w$
\begin{equation}
    {\bf B}_{s} = \bigotimes_{i \in s}B_i
\end{equation}
will have $r$ associated ${\bf m}_{s}$ measurement substrings.
We measure the parity of each substring and count how many have even and odd parity. 
In the case of no measurement errors, we claim that if ${\bf B}_{s}$ is an element of the stabilizer set, then all measurement outcomes supported on $s$ have even parity, that is, $p_K[0] = r$. However, as we want to allow some probability $q$ of a measurement error, we can only assume $\mathbb{E}(p[0]) = r$. Similarly, if $-{\bf B}_{s}$ belongs to the stabilizer, then $\mathbb{E}(p_K[1]) = r$.
We therefore assume that if $|p[0]-p[1]|>r/4$, the corresponding $\pm {\bf B}_{s}$ is an element of the stabilizer set.
\begin{tcolorbox}[enhanced jigsaw,breakable,pad at break*=1mm,
  colback=blue!5!white,colframe=blue!75!black,title=Example III: Algorithm $2$]
The powerset of $\{1,\dots,4\}$ containing up to $w=4$ elements is
\begin{equation}\label{eq:powersets}
\begin{split}
\mathcal{S}_4 =&\{ \{1\},\{2\},\{3\},\{4\},\{1,2\},\{1,3\},\{1,4\},\\
&\{2,3\},\{2,4\},\{3,4\},\{1,2,3\},\{1,2,4\},\\
&\{1,3,4\},\{2,3,4\},\{1,2,3,4\} \}.
\end{split}
\end{equation}
Algorithm $2$ will check, for every $s \in \mathcal{S}_4$, ${\bf B}_{s}$ and count the parity of all $r$ associated ${\bf m}_{s}$ measurement substrings. 
The table below shows a few examples of measurement substrings of  the measurement bases $ZXYZ$, $XXXX$ and $ZZZZ$ and the values of the parity that we would expect
\begin{center}
\setlength{\tabcolsep}{1pt}
\small
\begin{tabular}{|c|c|c|c|c|c|}
\hline
\makecell{Measurement \\ Basis} & Substring & $\mathbb{E}(p[0])$ & $\mathbb{E}(p[1])$ & $\mathbb{E}(\Delta p)$ & Stabilizer\\ 
\hline
\multirow{3}{*}{$ZXYZ$} & $\{1,2\}$ &$r/2$ & $r/2$ & $0$ &  \ding{55}\\ 
\cline{2-6}
  & $\{1,4\}$ &$r$ & $0$ & $r$ &  \ding{51}\\ 
\cline{2-6}
  & $\{1,2,3,4\}$ &$r/2$ & $r/2$ & $0$ &  \ding{55}\\ 
\hline
\multirow{2}{*}{$XXXX$} & $\{2,3\}$ & $r/2$ & $r/2$ & $0$ & \ding{55}  \\ 
\cline{2-6}
  & $\{1,2,3,4\}$ & $r$ & $0$ & $r$ & \ding{51}  \\ 
\hline
\multirow{3}{*}{$ZZZZ$} & $\{1\}$ & $r/2$ & $r/2$ & $0$ & \ding{55} \\ 
\cline{2-6}
  & $\{3,4\}$ & $r$ & $0$ & $r$ & \ding{51} \\ 
\cline{2-6}
 & $\{1,2,3,4\}$ & $r$ & $0$ & $r$ & \ding{51} \\ 
\hline
\end{tabular}
\end{center}
where we have defined $\Delta p = |p[0]-p[1]|$. 
We therefore conclude that $ZIIZ$, $XXXX$, $IIZZ$ and $ZZZZ$ are highly likely to be elements of the stabilizer set.
\\
\noindent
\underline{Note}: if we only use $N$ copies of the state $\ket{0_L}$ then there are $2^4$ possible elements of the stabilizer set we may find, not just $XXXX$ and $ZZZZ$.
\end{tcolorbox}

\subsection{Bound on the Number of Stabilizer States}\label{sec:Nbound}
For any qLDPC code, we can find a a set of stabilizer generators which have weight at most $w$. As algorithms $1$ and $2$ sample possible Pauli operators, we can limit our search to Pauli operators of weight at most $w$. 
\begin{definition}\label{def:A}
There are $A$ possible non-trivial Pauli operators of weight at most $w$ for an $n$-qubit system
\begin{equation}
A = \sum_{k=1}^w \binom{n}{k}3^k.
\end{equation}
\end{definition}
We then find that, for qLDPC codes, we can derive a bound on the number of codewords, $N$, we will need to reliably learn the stabilizer generators.
\begin{theorem}\label{th:Nbound}
Consider a stabilizer code with a particular set of $g$ (not necessarily unique) generators $G_1, \dots, G_g$ that we wish to learn. All of these generators will have weight at most $w$. Using algorithms 1 and 2, we require 
\begin{equation}
    N \ge 2A\ln(g)
\end{equation}
copies of the stabilizer state $\ket{\psi}$ to ensure we learn all $g$ stabilizer generators with probability at least $1/g$.
\end{theorem}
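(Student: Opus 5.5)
The bound has the shape of a coupon-collector estimate, so I will prove it by a union bound over the $g$ generators. Each of the $N$ copies of $\ket{\psi}$ is measured in a basis $\mathbf{B}$ chosen uniformly from the $3^n$ product Pauli bases. Call a generator $G_j$ \emph{hit} by a copy if that basis agrees with $G_j$ on its support, i.e.\ $B_i = (G_j)_i$ for every $i\in\mathrm{supp}(G_j)$. When this happens, the parity of $\mathbf{m}_{\mathrm{supp}(G_j)}$ is deterministic, and the weight-$\le w$ search in Algorithm 2 with $s=\mathrm{supp}(G_j)$ will find $\pm G_j$.

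First I need a lower bound on the probability $p_j$ that a single random basis hits a fixed $G_j$. For $G_j$ of weight $w_j\le w$, each qubit in the support is matched independently with probability $1/3$, so $p_j = 3^{-w_j}\ge 3^{-w}$. The stated bound is phrased with $A$ rather than $3^w$, which suggests the authors use a cruder model: each round effectively samples one of the $A$ nontrivial Pauli operators of weight at most $w$, giving $p_j \ge 1/A$. Since $A\ge 3^w$, the bound $p_j\ge 1/A$ is weaker than $p_j\ge 3^{-w}$, so it is valid in either model. I will use $p_j\ge 1/A$ to match the statement, and remark that $3^{-w}$ would give a stronger result.

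Next, with $N$ independent copies, the probability that $G_j$ is never hit is
\[
(1-p_j)^N \le (1-1/A)^N \le e^{-N/A}.
\]
A union bound over the $g$ generators then bounds the failure probability by $g\,e^{-N/A}$. Substituting $N\ge 2A\ln g$ gives $g\,e^{-2\ln g} = 1/g$. All $g$ generators are therefore hit with probability at least $1-1/g$. The statement's "probability at least $1/g$" presumably means failure probability at most $1/g$, and I would make that reading explicit.

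The main obstacle is justifying that being hit implies being learned. This requires the $r$ repetitions to push $\Delta p$ above the $r/4$ threshold, which is a separate concentration issue deferred to Theorem~\ref{th:pfailure}. Here I would assume noiseless measurements or a large enough $r$. It also requires that the learned set can be reduced to the chosen generators $G_1,\dots,G_g$. I would handle this by noting that Algorithm 2 enumerates all $s\in\mathcal{S}_w$, so $\mathrm{supp}(G_j)$ is tested, and every genuine stabilizer of the state found along the way can be discarded or reduced by Gaussian elimination without losing $G_j$.
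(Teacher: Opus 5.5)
Your proof is correct and matches the paper's argument: a per-round probability of at least $1/A$ of hitting each generator, the bound $(1-1/A)^N\le e^{-N/A}$, a union bound over the $g$ generators, and the choice of failure probability $\delta=1/g$. This gives $N\ge 2A\ln g$ and success probability at least $1-1/g$, which is also what the paper's proof actually establishes. Your justification $p_j=3^{-w_j}\ge 3^{-w}\ge 1/A$ supplies the step the paper only asserts, and you correctly write the union bound as an inequality where the paper writes an equality.
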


\subsection{Bound on the Probability of Success}\label{sec:successbound}
No physical computation is free from error, so we assume the qubit is subject to noise prior to measurement. We model this as a depolarizing channel: the identity operator is applied with probability $(1-3p/2)$, and each of a bit-flip ($X$), phase-flip ($Z$), or combined bit- and phase-flip ($Y$) error is applied with probability $p/2$. The qubit is then measured in the $X,Y$ or $Z$ basis with equal probability. Each error type is undetectable in its own basis but flips the outcome in the other two: an $X$ error flips the result of a $Z$ or $Y$ measurement, a $Z$ error flips the result of an $X$ or $Y$ measurement, and a $Y$ error flips the result of an $X$ or $Z$ measurement. Consequently, for any fixed measurement basis, exactly two of the three error types (each occurring with probability $p/2$) cause a flip, giving a total flip probability of $p/2+p/2=p$. Since this holds regardless of which basis is chosen, a bit-flip error on the outcome of a single measurement occurs with probability $p$ overall.
\begin{lemma}\label{lem:q}
If bit-flip errors on a single measurement outcome occur with probability $p$, then the outcome of any product measurement of a weight $w$ Pauli operator has probability $q$ of exhibiting a parity error, i.e. measuring a $0$ rather than $1$ or vice versa.
\begin{equation}\label{eq:q}
q= \frac{1 - (1-2p)^w}{2}
\end{equation}
\end{lemma}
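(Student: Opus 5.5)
The plan is to model the $w$ single-qubit outcomes entering the parity as carrying independent flip indicators $e_1,\dots,e_w\in\{0,1\}$, each equal to $1$ with probability $p$. The justification is the depolarizing model described just before the lemma: the channel acts independently on each qubit, and for any fixed basis it flips the outcome with probability exactly $p$. The parity of the noisy substring ${\bf m}_s$ equals the noiseless parity plus $e_1\oplus\cdots\oplus e_w$ (mod 2). A parity error therefore occurs exactly when $\bigoplus_{i=1}^w e_i=1$, and $q$ is the probability that a sum of $w$ i.i.d.\ Bernoulli$(p)$ variables is odd.

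I will compute this by the standard $\pm1$ (Fourier) trick. Since $(-1)^{\bigoplus_i e_i}=\prod_i(-1)^{e_i}$, independence gives
\begin{equation*}
\mathbb{E}\big[(-1)^{\bigoplus_i e_i}\big]=\prod_{i=1}^w \mathbb{E}\big[(-1)^{e_i}\big]=(1-2p)^w .
\end{equation*}
The left-hand side also equals $(1-q)-q=1-2q$. Solving $1-2q=(1-2p)^w$ gives equation (\ref{eq:q}). As a cross-check, one can instead sum the binomial terms with an odd number of flips: subtract the expansions of $((1-p)+p)^w$ and $((1-p)-p)^w$ and halve, which gives the same expression.

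Alternatively, I could argue by induction on $w$. The recursion is $q_w=q_{w-1}(1-p)+(1-q_{w-1})p$, so $1-2q_w=(1-2p)(1-2q_{w-1})$, with base case $q_1=p$.

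The only step I expect to need care is justifying that the flip events on different qubits are independent with the same probability $p$, irrespective of which bases were chosen. This follows because the noise channel is a tensor product over qubits. It also uses that the flip probability is $p$ for every basis, as computed in the preceding paragraph. The noiseless parity, which may itself be random when ${\bf B}_s$ is not a stabilizer, is independent of the noise, and it does not matter, since we only ask whether noise changes it. The rest is routine algebra.
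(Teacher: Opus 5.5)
Your proof is correct and is essentially the paper's argument: the paper sums the binomial terms with an odd number of flips and evaluates that sum as $\tfrac12\big[((1-p)+p)^w-((1-p)-p)^w\big]$, which is exactly your cross-check, and your $\pm1$ expectation computation is the same generating-function identity in different notation. Your explicit justification that the per-qubit flips are independent, each with probability $p$ whatever the measurement basis, supplies an assumption the paper uses without stating it.
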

For proof of lemma \ref{lem:q} see section \ref{sec:proof2}.
We measure $r$ copies of each Pauli operator, $Q$, and check if it has even or odd parity. 
Here, the parity of the product measurement is defined to be even if the number of +1 eigenvalues is obtained an even number of times, and defined to be odd if the number of -1 eigenvalues is obtained an odd number of times. We define the number of measurements of even parity as $P[0]$ and odd parity as $P[1]$. If there is a significant difference in parity counts, i.e. if $\left| P[0] -P[1] \right| >r/4$ then we assume $Q$ must be a member of the stabilizer group $G$, otherwise we reject $Q$.
\begin{lemma}\label{lem:paccept}
If the Pauli operator $Q$ is an element of the stabilizer group $G$, then the probability that it will correctly be accepted by our algorithm is
\begin{equation}
\Pr\left( \text{accept} | Q \in G \right) \geq 1-2 c_q^{5r/8},
\end{equation}
where 
\begin{equation}
c_q = \frac{8(1-q)}{5}e^{1-\tfrac{8(1-q)}{5}}.
\end{equation}
\end{lemma}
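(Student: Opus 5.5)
The plan is to reduce acceptance to a binomial tail event and then apply a multiplicative Chernoff bound; the constant $c_q$ is exactly what the lower-tail Chernoff bound gives at the acceptance threshold.

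\textbf{Setting up the counts.} Take $Q\in G$ of weight at most $w$ and assume without loss of generality that $+Q$ stabilizes the state, so the noiseless parity of every product-measurement outcome on $\mathrm{supp}(Q)$ is $0$. (If $-Q$ is the stabilizer, swap the roles of $P[0]$ and $P[1]$; the argument is identical.) By Lemma \ref{lem:q}, each of the $r$ independent copies returns the wrong parity with probability $q$. Hence $P[0]\sim\mathrm{Bin}(r,1-q)$ and $P[1]=r-P[0]$.

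\textbf{Reducing to a tail event.} Since $|P[0]-P[1]|=|2P[0]-r|$, the algorithm accepts whenever $P[0]>5r/8$ (it also accepts when $P[0]<3r/8$). Therefore
\begin{equation}
\Pr(\text{reject}\mid Q\in G)\le \Pr\left(P[0]\le 5r/8\right).
\end{equation}

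\textbf{Applying Chernoff.} Use the multiplicative lower-tail bound $\Pr(X\le(1-\delta)\mu)\le \left(e^{-\delta}/(1-\delta)^{1-\delta}\right)^{\mu}$ with $\mu=r(1-q)$. Choose $\delta$ so that $(1-\delta)\mu=5r/8$, that is, $1-\delta=\frac{5}{8(1-q)}$. This requires $\delta>0$, i.e. $q<3/8$, which is exactly the regime where $c_q<1$ and the bound is nontrivial; I will state this assumption explicitly. The right-hand side then becomes
\begin{equation}
e^{-(\mu-5r/8)}\left(\frac{8(1-q)}{5}\right)^{5r/8}=\left[\frac{8(1-q)}{5}\,e^{1-\frac{8(1-q)}{5}}\right]^{5r/8}=c_q^{5r/8},
\end{equation}
using $\mu-5r/8=\frac{5r}{8}\left(\frac{8(1-q)}{5}-1\right)$.

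\textbf{Conclusion and main obstacle.} This gives $\Pr(\text{accept}\mid Q\in G)\ge 1-c_q^{5r/8}\ge 1-2c_q^{5r/8}$. The extra factor $2$ can be read as a union bound over the two sign cases $\pm Q$ (or over both tails), so the stated inequality follows a fortiori. The only delicate points are:
\begin{itemize}
\item justifying the independence of the $r$ parity errors, which holds because the copies are measured independently and Lemma \ref{lem:q} applies to each;
\item checking that the exponent algebra collapses exactly to $c_q^{5r/8}$.
\end{itemize}
I expect the latter step to be the main place where care is needed.
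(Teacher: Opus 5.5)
Your proof is correct and follows the paper's argument: you reduce rejection to $\Pr(P[0]\le 5r/8)$ and apply the multiplicative Chernoff bound with $\mu=r(1-q)$ and $1-\delta=\tfrac{5}{8(1-q)}$, and your inequality-based reduction with an explicit hypothesis $q<3/8$ is cleaner than the paper's informal ``assume $P[0]>P[1]$''. One side remark is wrong, however: $c_q<1$ for every $q\neq 3/8$ (because $xe^{1-x}<1$ whenever $x\neq 1$), so $q<3/8$ is not merely where the bound becomes nontrivial but is genuinely needed, since the stated inequality fails for $3/8<q<5/8$ and large $r$.
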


\begin{lemma}\label{lem:preject}
If the Pauli operator $Q$ is not an element of the stabilizer group $G$, then the probability that our algorithm correctly rejects it is 
\begin{equation}
\Pr\left( \text{reject} | Q\notin G\right) \geq 1-2\left( \frac{64}{27 e}  \right)^{r/8}.
\end{equation}
\end{lemma}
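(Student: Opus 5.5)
The statement to prove is Lemma~\ref{lem:preject}. The plan is to reduce it to a two-sided Chernoff bound for a fair binomial. Fix a Pauli operator $Q$ with $Q\notin G$ and $-Q\notin G$. The first step is to show that a single product measurement of the qubits in the support of $Q$ has parity exactly uniformly distributed. If $Q$ anticommutes with some generator $G_j$, then $\langle\psi|Q|\psi\rangle = \langle\psi|QG_j|\psi\rangle = -\langle\psi|G_jQ|\psi\rangle = -\langle\psi|Q|\psi\rangle = 0$. If instead $Q$ commutes with every generator but $\pm Q\notin G$, then $Q$ is a nontrivial logical operator, and for the stabilizer states supplied as input we again expect $\langle Q\rangle = 0$. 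At this point I would state the implicit assumption that the codeword's full stabilizer group is what is being compared against.

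Since the parity bit $b$ satisfies $\Pr(b=0)-\Pr(b=1)=\langle Q\rangle$, the parity bit is exactly fair. Bit-flip noise composed with a fair bit leaves it fair: by Lemma~\ref{lem:q}, the noisy bias is $(1-2q)\langle Q\rangle = 0$. So regardless of $p$, the $r$ parity outcomes are i.i.d.\ Bernoulli$(1/2)$. Hence $P[1]\sim\mathrm{Bin}(r,1/2)$ and $P[0]=r-P[1]$.

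The rejection condition $|P[0]-P[1]|\le r/4$ is equivalent to $|P[1]-r/2|\le r/8$. A false accept therefore requires $P[1] > (1+\delta)\mu$ or $P[1]<(1-\delta)\mu$, with $\mu=r/2$ and $\delta=1/4$. I will use the multiplicative Chernoff bound
\begin{equation}
\Pr\left(P[1]\ge(1+\delta)\mu\right)\le \left(\frac{e^{\delta}}{(1+\delta)^{1+\delta}}\right)^{\mu}.
\end{equation}
With $\delta=1/4$ this gives $\bigl(e^{1/4}(4/5)^{5/4}\bigr)^{r/2}$. For the lower tail I will use the symmetry of $\mathrm{Bin}(r,1/2)$, or equivalently apply the same bound to $P[0]$. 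The union bound then gives the factor $2$.

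The step I expect to be fiddly is matching the exact constant $(64/(27e))^{r/8}$. Note that $64/27=(4/3)^3$, so the target is $\bigl((4/3)^3e^{-1}\bigr)^{r/8}$. This suggests the authors parametrize the deviation differently, for example with deviation $r/8$ measured relative to a mean of $r/2$ but using a form like $\bigl(e^{-\delta}(1-\delta)^{-(1-\delta)}\bigr)^{\mu}$ or a relative-entropy (Chernoff–Hoeffding) form. I would first compute $D(3/8\,\|\,1/2)$ and $D(5/8\,\|\,1/2)$ and compare them with the stated exponent. If they do not match, I would pick whichever standard Chernoff variant is weaker than or equal to the KL bound but still yields $(64/(27e))^{r/8}$. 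The weaker stated constant then follows, because any valid Chernoff exponent that is at most the optimal KL exponent still gives a correct upper bound. The final answer is that the rejection probability is at least $1$ minus twice the one-sided tail bound.
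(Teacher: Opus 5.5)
Your setup matches the paper's argument. You have i.i.d.\ fair parity bits, false acceptance exactly when $P[1]$ leaves $[3r/8,5r/8]$, a multiplicative Chernoff bound with $\mu=r/2$, $\delta=1/4$, and a union bound for the factor $2$. Your justification of fairness (anticommutation gives $\langle Q\rangle=0$, and the noise rescales the bias by $1-2q$) is more careful than the paper's, which simply asserts $\mathbb{E}(P[0])=\mathbb{E}(P[1])=r/2$. Do tighten the logical-operator case, though. If $G$ is the full stabilizer group of $|\psi\rangle$, every Hermitian Pauli commuting with $G$ lies in $\pm G$, so that case is empty. If $G$ is only the code's stabilizer group, the lemma is false: in Example III, $IIZZ$ is accepted on $|0_L\rangle$. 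So this is a necessary assumption, not something one ``expects''.

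The gap is in the one bound you actually commit to. The upper-tail form with $\mu=r/2$, $\delta=1/4$ gives $\bigl(e^{1/4}(4/5)^{5/4}\bigr)^{r/2}=(1024e/3125)^{r/8}\approx 0.891^{r/8}$. This is strictly weaker than the claimed $(64/(27e))^{r/8}\approx 0.872^{r/8}$. Handling the other tail by symmetry and applying the same upper-tail bound to $P[0]$ reproduces the same weaker number, so as written you do not obtain the stated constant.

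The missing step is to write both bad events as lower tails. The event $P[1]>5r/8$ is the same as $P[0]<3r/8$, and both $P[0]$ and $P[1]$ are $\mathrm{Bin}(r,1/2)$. The lower-tail bound $\bigl(e^{-\delta}(1-\delta)^{-(1-\delta)}\bigr)^{\mu}$ then gives exactly $\bigl(e^{-1/4}(4/3)^{3/4}\bigr)^{r/2}=(64/(27e))^{r/8}$ for each of $\Pr(P[0]<3r/8)$ and $\Pr(P[1]<3r/8)$. This is what the paper's proof does, although its display mislabels the event as $P[0]\le 5r/8$; the Chernoff bound it invokes actually controls $P[0]\le 3r/8$.

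Your KL fallback would also close the gap, but only once you carry out the check: $D(5/8\|1/2)-\tfrac18\ln\tfrac{27e}{64}=\tfrac58\ln\tfrac54-\tfrac18>0$, because $(5/4)^5>e$.
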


Proofs of lemmas \ref{lem:paccept} and \ref{lem:preject} can be found in sections \ref{sec:proof3} and \ref{sec:proof4}.

\begin{theorem}\label{th:psuccess}
Following from lemmas \ref{lem:paccept} and \ref{lem:preject}, the probability that the algorithm is successful, i.e. correctly identifies a Pauli operator $Q$ as an element of the stabilizer group $G$, and rejects it if not an element of $G$ is
\begin{equation}
\Pr(\text{success}) \geq 1-2\max \left\{ c_q^{5r/8}, \left(\frac{64}{27 e}  \right)^{r/8} \right\}.
\end{equation}
\end{theorem}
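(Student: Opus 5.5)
The plan is to combine Lemmas \ref{lem:paccept} and \ref{lem:preject} by conditioning on whether $Q$ lies in the stabilizer group. Define the event ``success'' for a given Pauli operator $Q$ as: $Q$ is accepted if $Q\in G$, and rejected if $Q\notin G$. Since $Q$ is a fixed operator, exactly one of the two cases holds. If $Q$ is treated as random, we condition on membership and use the law of total probability:
\begin{equation}
\Pr(\text{success}) = \Pr(Q\in G)\Pr(\text{accept}\mid Q\in G) + \Pr(Q\notin G)\Pr(\text{reject}\mid Q\notin G).
\end{equation}

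Next, insert the two lemma bounds. Lemma \ref{lem:paccept} gives $\Pr(\text{accept}\mid Q\in G)\ge 1-2c_q^{5r/8}$. Lemma \ref{lem:preject} gives $\Pr(\text{reject}\mid Q\notin G)\ge 1-2(64/(27e))^{r/8}$. Each of these is at least $1-2M$, where
\begin{equation}
M=\max\left\{c_q^{5r/8},\left(\tfrac{64}{27e}\right)^{r/8}\right\}.
\end{equation}
The total probability is then a convex combination of two quantities, each bounded below by $1-2M$. Since the weights $\Pr(Q\in G)$ and $\Pr(Q\notin G)$ sum to one, the combination is also at least $1-2M$. This is the claimed bound, and it holds uniformly, whatever the (unknown) prior on membership.

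The only points needing care are the following.

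\textbf{Threshold rule.} Both lemmas must refer to the same decision rule, $|P[0]-P[1]|>r/4$. Otherwise ``accept'' and ``reject'' would not be complementary.

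\textbf{Range of $c_q$.} The quantity $c_q$ should lie in $(0,1]$ so that the bound is meaningful. The function $x e^{1-x}$ is at most $1$, with equality only at $x=1$. Here $x=8(1-q)/5$, and by Lemma \ref{lem:q} we have $q<1/2$, so $x>4/5$. Hence $c_q\le 1$ automatically.

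\textbf{Noise model.} The parity-error probability $q$ in Lemma \ref{lem:paccept} is taken from Lemma \ref{lem:q} for the weight of $Q$.

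I do not expect a real obstacle. The statement is a direct corollary: the substantive concentration (Chernoff-type) work is already done in the two lemmas, and this step only packages it via a case split and a worst-case max.
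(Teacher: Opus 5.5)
Your proposal is correct and follows the paper, which gives no separate argument beyond combining Lemmas \ref{lem:paccept} and \ref{lem:preject}: each case (accept given $Q\in G$, reject given $Q\notin G$) is bounded below by $1-2\max\{c_q^{5r/8},(64/(27e))^{r/8}\}$, so the per-operator success probability is as well. On your side remark about $c_q$: $xe^{1-x}\le 1$ holds for every real $x$, so $q<1/2$ is not the relevant condition; what Lemma \ref{lem:paccept}'s Chernoff step actually needs is $q<3/8$ (so that $\delta=1-\tfrac{5}{8(1-q)}>0$), a hypothesis the theorem inherits implicitly.
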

Notice that, as long as we are able to achieve a parity error $q<0.217087$, and therefore a bit-flip error $p<\tfrac{1}{2}\left( 1-0.565825^{1/w} \right)$, then 
\begin{equation}\label{eq:qrequirement}
    \max \left\{ c_q^{5r/8}, \left(\frac{64}{27 e}  \right)^{r/8} \right\} = \left(\frac{64}{27 e}  \right)^{r/8}. 
\end{equation}
The bivariate bicycle (BB) codes \cite{bravyi2024high} are a good candidate for verification using our algorithm, as they form a family of weight-6 stabilizer codes. As stated in \cite{bravyi2024high}, a physical error rate of $10^{-3}$ is a realistic target for near-term demonstrations of BB codes. For $w=6$, Eq. \eqref{eq:qrequirement} holds for $p < 0.0452733$. This corresponds to a bound on the depolarising noise $3p/2<0.0679099$.

For each measurement pattern, we include up to weight $w$ Pauli operators, and therefore the probability that there is at least one incorrect acceptance of a stabilizer element is $2 A \left(\frac{64}{27 e}  \right)^{r/8}$, therefore after $N$ rounds, we find
\begin{equation}
\Pr(\text{failure}) \leq 2 NA\left(\frac{64}{27 e}  \right)^{r/8}.
\end{equation}
Theorem \ref{th:Nbound} tells us that we want $N \ge 2A\ln(g)$.
We can set $N = 2A\ln(n)$.
Then the upper bound of the failure probability is at most 
\begin{equation}
 \Pr(\text{failure}) \leq 4 A^2 \ln(n) \left(\frac{64}{27 e}  \right)^{r/8} .\label{pfail}
\end{equation}

\begin{theorem}\label{th:pfailure}
For our protocol, choosing
\begin{equation}
r \geq 76 \left( 2w\ln(3n^2) + \ln(4w^2)+\ln\left(\ln(n)\right)\right)   
\end{equation}
gives an upper bound on the failure probability
\begin{equation}
\Pr(\text{failure})\leq n^{-2w}
\end{equation}
\end{theorem}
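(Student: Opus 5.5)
The plan is to start from the failure bound \eqref{pfail}, $\Pr(\text{failure}) \le 4A^2\ln(n)\,(64/(27e))^{r/8}$, which was derived from Theorem \ref{th:psuccess} under condition \eqref{eq:qrequirement}, with the choice $N = 2A\ln(n)$ justified by Theorem \ref{th:Nbound}. It then suffices to find $r$ such that the right-hand side is at most $n^{-2w}$. Taking logarithms, and writing $\lambda := \ln\!\big(27e/64\big) > 0$, the requirement becomes
\begin{equation*}
\frac{r}{8}\,\lambda \ \ge\ \ln(4) + 2\ln A + \ln\ln(n) + 2w\ln n .
\end{equation*}

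The first step is to bound $A$ by something elementary. I would use $\binom{n}{k}\le n^k$ to get $A \le \sum_{k=1}^w (3n)^k \le w(3n)^w$, valid for $3n\ge 1$. Then $2\ln A \le 2\ln w + 2w\ln(3n)$. Adding the $2w\ln n$ term gives
\begin{equation*}
2w\ln(3n) + 2w\ln n = 2w\ln(3n^2).
\end{equation*}
Likewise, $\ln 4 + 2\ln w = \ln(4w^2)$. The right-hand side is therefore bounded by $2w\ln(3n^2) + \ln(4w^2) + \ln\ln(n)$, which is exactly the bracket in the theorem statement.

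The second step is the constant. The requirement reduces to $r \ge (8/\lambda)\,[\,\cdots]$. Numerically, $\lambda = \ln(27e/64) = 1 + \ln 27 - \ln 64 \approx 1 + 3.2958 - 4.1589 = 0.1369$, so $8/\lambda \approx 58.4$. Since $76 \ge 8/\lambda$, the stated choice of $r$ suffices, assuming the bracket is nonnegative (true for $n\ge 3$, where $\ln\ln n>0$). Monotonicity in $r$ then gives $\Pr(\text{failure})\le n^{-2w}$.

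The main obstacle is less the algebra than the bookkeeping of hypotheses. We need the regime where \eqref{eq:qrequirement} holds, i.e. $q<0.217087$, so that the max in Theorem \ref{th:psuccess} is the $(64/(27e))^{r/8}$ term. We also need to confirm the exact form of the bound on $A$ so that the constants match the stated $\ln(4w^2)$. If the crude bound $A\le w(3n)^w$ somehow fails to align with the stated bracket, I would fall back on the looser $A\le (3n)^w\cdot w$ via the geometric-series estimate $\sum_{k\le w}(3n)^k \le \tfrac{3n}{3n-1}(3n)^w$. Since the slack between $58.4$ and $76$ easily absorbs such constant factors, this should not be a problem.
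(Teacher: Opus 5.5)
Your proposal is correct and follows the paper's proof: bound $A\le w(3n)^w$, substitute into \eqref{pfail}, take logarithms, and use $2w\ln(3n)+2w\ln n=2w\ln(3n^2)$ and $\ln 4+2\ln w=\ln(4w^2)$. The only difference is the constant. You use the exact rate, giving $8/\ln(27e/64)\approx 58.4$, whereas the paper rounds $64/(27e)<9/10$, so that $8/\ln(10/9)\approx 75.9$, which is where $76$ comes from; your appeal to monotonicity in $r$ and to nonnegativity of the bracket also states the final implication in the correct direction.
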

A proof of theorem \ref{th:pfailure} can be found in \ref{sec:proof6}.
Therefore, whenever $r=\Theta\left( w \ln(n) \right)$, the failure probability $\Pr(\text{failure}) \rightarrow 0$ as $n\rightarrow 0$.

The memory requirements of the algorithm are linear in the code length. At any stage, the algorithm stores only the learned stabilizer information, which contains $O(n)$ entries. The dominant computational cost arises from enumerating the set of candidate stabilizer generators up to weight $w$, whose size is bounded by $A=O(n^w)$. Since $w$ is constant for qLDPC codes, this contribution is polynomial in $n$. Combining this with the query complexity $N=\Theta(w\ln(n))$, the overall computational complexity is polynomial in $n$. Therefore both the computational complexity and memory requirements remain efficient for qLDPC codes.

There are several existing protocols for quantum state verification (QSV) of stabilizer states, such as \cite{dangniam2020optimal}, which verifies that a given state has infidelity at most $\epsilon$ with respect to a known target stabilizer state, with sample complexity $N=O\left(\tfrac{3}{2\epsilon}\ln\left(\tfrac{1}{\delta}\right)\right)$, where $\delta$ is the failure probability.
Similarly, \cite{takeuchi2018verification} introduces an adaptive protocol that verifies whether a multi-qubit state is within some tolerance of a target state. This is done via sequential single-qubit Pauli measurements.  
Both approaches, however, assume that the target stabilizer generators are known in advance, and can only test proximity to a fixed state rather than learn any of its stabilizer generators.
More closely related to our setting is subspace-verification-assisted fidelity estimation (SVAFE) \cite{chen2025quantum}, which extends QSV from individual states to code subspaces. 
Given a target code, the SVAFE protocol has sample complexity scaling as $O\left(\tfrac{g}{\epsilon}\ln\left(\tfrac{1}{\delta}\right)\right)$ for generic stabilizer codes with $g$ stabilizer generators.
Here again, however, both the code subspace and its stabilizer generators are assumed to be known in advance.
Our algorithm addresses the converse problem, which to our knowledge has not previously been studied: given copies of a stabilizer state from an unknown code space, it recovers a valid set of stabilizer generators using $N=\Theta(w\ln(n))$ copies. This demonstrates that verification of a stabilizer QEC code can be achieved with only polylogarithmically many copies of the stabilizer state, even without prior knowledge of its generators.

\section{Technical Details: Algorithms and Proofs}\label{sec:method}
In this section, we give a formal treatment of the results introduced in Section \ref{sec:results}. We begin by presenting explicit descriptions of the two algorithms outlined in Sections \ref{sec:alg1sum} and \ref{sec:alg2sum}. We then provide proofs of Theorem \ref{th:Nbound} and \ref{th:pfailure} and Lemmas \ref{lem:q}, \ref{lem:paccept} and \ref{lem:preject}, establishing the bounds on the number of stabilizer states required and the probability of success stated in sections \ref{sec:Nbound} and \ref{sec:successbound} respectively.

\begin{figure}[h]
\centering
\begin{minipage}{0.95\columnwidth}

\hrule height 0.5pt
\vspace{4pt}

\textbf{Algorithm 1: Measurement Process for Stabilizer State Identification}

\vspace{4pt}
\hrule height 0.5pt
\vspace{2pt}

\begin{algorithmic}[1]
\For{$b = 1$ to $m$} \Comment{Iterate over measurement patterns}
    \State Generate $\mathbf{x}_b = (x_{b,1}, \ldots, x_{b,n})$ \Comment{Random $n$-trit string}
    \State $x_{b,i} \in \{0, 1, 2\}$ for $i = 1, \ldots, n$
    \Comment{0: Z, 1: X, 2: Y basis}
    \For{$u = 1$ to $r$}
        \Comment{Repeat each pattern $r$ times}
        \State $\ket{\psi} \leftarrow \text{Oracle()}$
        \Comment{Obtain copy of unknown state}
        \For{$i = 1$ to $n$}
            \Comment{Measure each qubit}
            \State Measure qubit $i$ in basis specified by $x_{b,i}$
            \If{$x_{b,i} = 0$}
                \State Measure in Z basis
            \ElsIf{$x_{b,i} = 1$}
                \State Measure in X basis
            \ElsIf{$x_{b,i} = 2$}
                \State Measure in Y basis
            \EndIf
            \State $m_{u,b,i} \leftarrow$ measurement outcome
            \Comment{Record result}
        \EndFor
    \EndFor
\EndFor
\end{algorithmic}

\vspace{2pt}
\hrule height 0.5pt

\end{minipage}
\end{figure}

\begin{figure}[h]
\centering
\begin{minipage}{0.95\columnwidth}

\hrule height 0.5pt
\vspace{4pt}

\textbf{Algorithm 2:Identifying Stabilizer Generators}

\vspace{4pt}
\hrule height 0.5pt
\vspace{2pt}

\begin{algorithmic}[1]
\Require  
$(\mathcal{X})_b$: Measurement pattern matrix
\newline
$(\mathcal{M})_{u,b}$: Measurement outcome matrix
\Ensure
$G$: Set of $n$ generators of the stabilizer state, including their respective signs

\State $G \gets \emptyset$ \Comment{Initialize generator set}
\For{$b = 1$ to $m$} \Comment{Iterate over measurement patterns}
    \State $\mathcal{S}_w \gets \left\{ S \subseteq \{1,\dots,n\} : |S| \le w \right\}$ \Comment{Generate subset of the power set with Hamming weights at most $w$}
    \For{each subset $s \in \mathcal{S}_w$}
        \State $P \gets \{0: 0, 1: 0\}$ 
        \Comment{Initialize parity counts}
        \For{$u = 1$ to $r$} \Comment{Iterate over repetitions}
            \State parity $\gets \bigoplus_{q \in s} (M)_{u,b,q}$ \Comment{Calculate parity}
            \State $P[\text{parity}] \gets P[\text{parity}] + 1$ \Comment{Update parity counts}
        \EndFor
        \If{$|P[0] - P[1]| > r/4$} \Comment{Check for significant difference}
            \State sign $\gets$ `+' if $P[0] > P[1]$ else `-' \Comment{Determine stabilizer sign}
            \State generator $\gets$ sign
            \For{$i = 1$ to $n$} \Comment{Construct stabilizer generator}
                \If{$i \in s$}
                    \State generator $\gets$ append$\big($generator , PauliOperator($(X)_{b,q}$)$\big)$
                \Else
                    \State generator $\gets$ append$\big($generator , `I'$\big)$
                \EndIf
            \EndFor
            \State $G \gets G \cup \{\text{generator}\}$ \Comment{Add generator to set}
        \EndIf
    \EndFor
\EndFor
\State \Return $G$ \Comment{Return the generator set}
\end{algorithmic}

\vspace{2pt}
\hrule height 0.5pt

\end{minipage}
\end{figure}

\subsection{Proof of Theorem \ref{th:Nbound}}\label{sec:proof1}
\begin{proof}
 Of all possible Pauli operators of weight at most $w$, the probability of learning $G_1$ is $1/A$, therefore, the probability of not learning $G_1$ is $1-1/A$.
After $N$ queries, the probability of not learning $G_1$ is $(1-1/A)^N$.
We need to find all $g$ stabilizer generators, so our algorithm fails if we do not find any one of $G_i, \; i=1,\dots, g$. The probability of this happening is 
\begin{equation}\label{eq:probfail}
 \Pr(\text{fail}) =   g \left( 1-\frac{1}{A}\right)^N.
\end{equation}
Using the inequality $(1-\frac{1}{A}) \leq e^{-1/A}$ for $A>1$, we get
\begin{equation}
\Pr(\text{fail}) \leq g e^{-N/A}.
\end{equation}
Now, let the target worst-case failure probability be $\delta$,
then let us impose the inequality
\begin{equation}
g e^{-N/A} \le \delta.
\end{equation}
Since the exponential function is a monotone function, we take the natural logarithm of both sides of the above inequality to get
\begin{align}
 \ln g - N/A \le \ln \delta
\end{align}
which is equivalent to 
\begin{align}
    - N/A  \le - \ln g + \ln \delta,
\end{align}
therefore
\begin{align}
    N  \ge A \ln g + A \ln(1/\delta).
\end{align}
For $\delta =1 / g$, we get
\begin{align}
    N  \ge 2  A \ln g .
\end{align}
\end{proof}

\subsection{Proof of Lemma \ref{lem:q}}\label{sec:proof2}
\begin{proof}
When performing a product measurement of a Pauli operator of weight $w$ (that is, a single-qubit measurement on each of the $w$ qubits in its support), if an even number of bit-flip errors occur, we will still obtain the correct measurement outcome. The probability of an incorrect measurement outcome is therefore the sum of the probabilities of any odd number of bit-flip errors occurring
\begin{equation}
\begin{split}
    q &= \sum_{\substack{ 1 \leq k\leq w \\ k \; \text{odd}}} \binom{w}{k}p^k(1-p)^{w-k}.
\end{split}
\end{equation}
By the method of generating functions, this is just 
\begin{align}
  q=   ( (1-p) + p)^w - ((1-p)-p)^w ) /2,
\end{align}
which when simplified gives the result.
\end{proof}

\subsection{Proof of Lemma \ref{lem:paccept}}\label{sec:proof3}
\begin{proof}
Let us first look at the case $Q\in\left\{ I,X,Y,Z \right\}^{\otimes n}$. Then the expected even parity count, when no parity errors occur, is $\mathbb{E}\left( P[0] \right) = r$ and the expected odd parity count is $\mathbb{E}\left( P[1] \right) = 0$. However, when we allow for parity errors to occur with probability $q$, then the expected parity counts become $\mathbb{E}\left( P[0] \right) = r(1-q)$ and $\mathbb{E}\left( P[1] \right) = rq$. We assume that $q$ is small and therefore $P[0]>P[1]$, therefore we can calculate the probability that $Q$ is rejected
\begin{equation}
\begin{split}
\Pr \left( |P[0]-P[1]|\leq \frac{r}{4} \right) &= \Pr\left( 2P[0]-r \leq \frac{r}{4} \right)\\
&=\Pr\left( P[0] \leq \frac{5r}{8} \right).
\end{split}
\end{equation}
We can then use the multiplicative Chernoff bound \cite{mitzenmacher2017probability}
\begin{equation}\label{eq:chernoff}
\Pr(X\leq (1-\delta)\mu)\leq \left( \frac{e^{-\delta}}{(1-\delta)^{1-\delta}} \right)^{\mu},
\end{equation}
where we choose $\mu = \mathbb{E}(P[0]) = r(1-q)$ and $\delta = 1-\tfrac{5}{8(1-q)}$ to get
\begin{equation}
\Pr\left( P[0]\leq \frac{5r}{8} \right) \leq c_q^{5r/8}.
\end{equation}
Similarly, if $Q\in-\left\{ I,X,Y,Z\right\}^{\otimes n}$ then $\mathbb{E}(P[1]) = r(1-q)$ and $\mathbb{E}(P[0]) = rq$. By the same reasoning we find that 
\begin{equation}
\Pr\left( P[1] \leq \frac{5r}{8} \right) \leq c_q^{5r/8}.
\end{equation}
Therefore if $Q\in \pm \left\{ I,X,Y,Z \right\}^{\otimes n}$ then the probability that our algorithm will correctly accept $Q$ as an element of the stabilizer group $G$ is 
\begin{equation}
\Pr(\text{accept}|Q\in G) \geq 1-2 c_q^{5r/8}.
\end{equation}
\end{proof}

\subsection{Proof of Lemma \ref{lem:preject}}\label{sec:proof4}
\begin{proof}
If we take measurements of $r$ copies of a Pauli operator $Q \notin G$, parity errors are equally likely to occur to both the even and odd parities, therefore the expectation values of the parity counts are $\mathbb{E}(P[0]) = \mathbb{E}(P[1]) = r/2$. There are two possible cases, either $P[0]>P[1]$ or $P[1]>P[0]$. Starting with the case $P[0]>P[1]$
\begin{equation}
\begin{split}
\Pr\left(|P[0]-P[1]|\leq \frac{r}{4} \right) &= \Pr\left( 2P[0] - r \right)\\
&= \Pr\left( P[0]\leq \frac{5r}{8} \right).
\end{split}
\end{equation}
Then, using the multiplicative Chernoff bound in eq. \eqref{eq:chernoff} with $\mu = \mathbb{E}(P[0]) = r/2$ and $\delta = 1/4$, we get
\begin{equation}
\Pr\left( P[0] \leq \frac{5r}{8}\right) \leq \left(\frac{64}{27 e}  \right)^{r/8}.
\end{equation}
For the case of $P[1]>P[0]$, similarly we get
\begin{equation}
\begin{split}
\Pr\left( |P[0]-P[1]|\leq \frac{r}{4} \right)&= \Pr\left( 2P[1]-r\leq \frac{r}{4} \right)\\
&= \Pr\left( P[1]\leq \frac{5r}{8} \right)\\
&\leq \left(\frac{64}{27 e}  \right)^{r/8}.
\end{split}
\end{equation}
Therefore, the probability of rejecting a Pauli operator, $Q$, that is not an element of the stabilizer group $G$ is
\begin{equation}
\Pr\left( \text{reject}| Q\notin G \right) \geq 1-2\left(\frac{64}{27 e}  \right)^{r/8}.
\end{equation}
\end{proof}

\subsection{Proof of Theorem \ref{th:pfailure}}\label{sec:proof6}
\begin{proof}
Let us take the coarse upper bound $A<w(3n)^w$ which comes from the fact that
\begin{align*}
\binom{n}{k} \leq n^k  \implies \binom{n}{k}3^k \leq (3n)^k \implies \sum_{k=1}^w \binom{n}{k}3^k \leq w(3n)^w.
\end{align*}
We will use \eqref{pfail} for the upper bound on the probability of failure. 
Noting that $64/(27e)<9/10$, the failure probability is bounded above by $n^{-2w}$ if the following inequality holds
\begin{equation}
 4w^2(3n)^{2w} \ln(n) (0.9)^{r/8} \le n^{-2w}.
\end{equation}
Taking the natural logarithm of both sides gives
\begin{equation}
\ln(4w^2)+2w\ln(3n)+\ln(\ln(n))+\frac{r}{8}\ln\left(\frac{9}{10}\right) \leq -2w\ln(n) , 
\end{equation}
therefore 
\begin{equation}
\frac{r}{8}\ln\left(\frac{9}{10}\right) \leq -2w\ln(3n^2) - \ln(4w^2)-\ln(\ln(n)).
\end{equation}
Then, noting that $\ln(9/10)<0$, we can find a bound on $r$
\begin{equation}
\begin{split}
r &\geq \frac{-8\left( 2w\ln(3n^2) + \ln(4w^2)+\ln(\ln(n)) \right)}{-\ln(10/9)}\\
&\geq 76\left(2w\ln(3n^2) + \ln(4w^2)+\ln(\ln(n)) \right),
\end{split}
\end{equation}
proving the lemma.
\end{proof}

\section{Conclusions}
In conclusion, we have presented an algorithm for learning the stabilizer generators that define any stabilizer code from
$N$ states belonging to the codespace. Our approach requires no prior knowledge of the code's structure beyond access to copies of states in the codespace, making it broadly applicable across different families of stabilizer codes. 
We defined the probability of a parity error occurring, which allowed us to obtain a bound on the algorithm's overall probability of success.
This method is particularly successful for qLDPC codes whose stabilizer generators are all of weight at most $w$. To learn the stabilizer generators with probability $1/g$, it suffices to take $N\geq A \ln(g)$, where $A$ is the number of Pauli operators of weight at most $w$. Thus, the learning problem remains tractable for codes with sparse stabilizer generators.
We have also shown that with query complexity polynomial in $n$, we may verify any stabilizer qLDPC code without prior knowledge of its generators.
An interesting open question is whether similar learning algorithms can be developed for non-stabilizer quantum codes.

\section{Acknowledgements}\label{eq:acknow}
Y.O. and H.L. acknowledge support from EPSRC (Grant No. EP/W028115/1). 
Y.O. also acknowledges support from the EPSRC funded QCI3 Hub under Grant No. EP/Z53318X/1.

\bibliography{ref}

\end{document}